\newtheorem{thm}{Theorem}
\title{Sparsifying Defaults: Optimal Bailout Policies for Financial Networks in Distress}
\name{Zhang Li and Ilya Pollak}
\address{School of Electrical and Computer Engineering\\
Purdue University\\
West Lafayette, IN 47906\\
li424@purdue.edu and ipollak@ecn.purdue.edu
}
\begin{document}
\maketitle
\pagestyle{empty}

\section{Motivation and Our Contributions}

The events of the last few years revealed an acute need for tools to systematically
model and analyze large financial networks.  Many applications of such tools include
the forecasting of systemic failures and analyzing probable effects of economic
policy decisions.

We consider the problem of optimizing the
amount and structure of a bailout in a borrower-lender network. Two broad
application scenarios motivate our work: day-to-day monitoring of financial
systems and decision making during an imminent crisis.  Examples of the latter
include the decision in September 1998 by a group of financial institutions to
rescue Long-Term Capital Management, and the decisions by the Treasury and the Fed
in September 2008 to rescue AIG and to let Lehman Brothers fail.
The deliberations leading to these and other similar actions have been extensively
covered in the press.  These reports suggest that the decision making
processes could benefit from quantitative methods for analyzing potential repercussions
of contemplated actions.  In addition, such methods could help avoid systemic
crises in the first place, by informing day-to-day actions of financial institutions
and governments.

Forecasting and preventing systemic failures is an open problem, despite
a surge in the research literature during the last four years.  There are two
main difficulties.  First, the data
on borrower-lender relationships and capital structure of financial institutions
is largely unavailable to academic researchers.  Even the
data available to regulators is far from exhaustive and perfect.  Second,
the network of financial relationships is very large, complex,
and dynamic.

Given a financial network model, we are interested in addressing the following
problem.
\begin{enumerate}
\item[]{\bf Problem I:} Given a fixed amount of cash $C$ to be injected into the system, how
should it be distributed among the nodes in order to achieve the smallest overall
amount $D$ of unpaid liabilities?
\end{enumerate}
An alternative, Lagrangian, formulation of the same problem, is to both select $C$
and determine how to distribute it in order to minimize $C+\lambda D$,
where $\lambda$ is the cost associated with every dollar of unpaid liabilities.
In this formulation, $\lambda$ can be used to model the trade-off between the costs of a bailout
(direct costs as well as moral hazard) and the costs of defaults.


In this work, we consider a static model with a single maturity
date, and with a known network structure.  Specifically, we assume
that we know both the amounts owed by every node in the network to
every other node, and the amounts of cash available at every node.
Even for this relatively simple model, Problem~I is far from
straightforward, because of a nonlinear relationship between the
cash injection amounts and the loan repayment amounts.  Building
upon the results from~\cite{EiNo01}, we construct algorithms for
computing exact solutions for Problem~I and its Lagrangian variant,
by showing that both formulations are equivalent to linear programs.

We also consider another problem where the objective is to minimize the number
of defaulting nodes rather than the overall amount of unpaid liabilities:
\begin{enumerate}
\item[]{\bf Problem II:} Given a fixed amount of cash $C$ to be injected into the system, how
should it be distributed among the nodes in order to minimize the number of nodes
in default, $N_d$?
\end{enumerate}

For Problem~II, we develop an approximate algorithm using a reweighted $\ell_1$ minimization
approach inspired by~\cite{CaWaBo08}.  We illustrate our algorithm using an
example with synthetic data for which the optimal solution can be calculated exactly,
and show through numerical simulation that the solutions calculated by our algorithm
are close to optimal.

In Section~\ref{sect:model} we describe our model and the results from prior literature
that we use.  Our own results---the equivalence of Problem~I to a linear program
and the approximate algorithm for Problem~II---are described in Section~\ref{sect:results}.

\section{Notation, Model, and Background}
\label{sect:model}

\begin{table}
\caption{Notation for several vector \vspace*{-0.05in} quantities}
\begin{center}
\begin{tabular}{|l|p{2in}|}
\hline
\textsc{vector} & \textsc{$i$-th component} \\
\hline
{\bf 0} & 0\\
\hline
{\bf 1} & 1\\
\hline
${\bf e} \geq {\bf 0}$ & cash on hand at node $i$ \\
\hline
${\bf c} \geq {\bf 0}$ & external cash injection to node $i$ \\
\hline
$\bar{\bf p}$ & the amount node $i$ owes to all its creditors \\
\hline
${\bf p} \leq \bar{\bf p}$ & the total amount node $i$
      actually repays all its creditors on the due date of the loans \\
\hline
$\bar{\bf p} - {\bf p}$ & node $i$'s total unpaid liabilities \\
\hline
${\bf q}$ & the total amount node $i$ actually receives from all its borrowers \\
\hline
${\bf r} = {\bf q} + {\bf e} + {\bf c}$ & the total funds available to $i$ for making
payments to its creditors \\
\hline
\end{tabular}
\vspace*{-0.2in}\\
\end{center}
\label{tb:notation}
\end{table}

Our network model is a directed graph with $N$ nodes where a
directed edge from node $i$ to node $j$ with weight $L_{ij}>0$
signifies that $i$ owes $\$L_{ij}$ to $j$. This is a one-period
model with no dynamics---i.e., we assume that all the loans are due
on the same date and all the payments occur on that date. We use the
following notation:
\begin{itemize}
\item any inequality whose both sides are vectors is component-wise;
\item $\mathbf{0}$, $\mathbf{1}$, ${\bf e}$, ${\bf c}$, $\bar{\bf p}$, ${\bf p}$, ${\bf q}$, and ${\bf r}$
are all vectors in $\mathbb{R}^N$ defined in Table~\ref{tb:notation};
\item $D = \mathbf{1}^T(\bar{\bf p} - {\bf p})$ is the overall amount of unpaid liabilities in
the system;
\item $N_d$ is the number of nodes in default, i.e., the number of nodes $i$
whose payments are below their liabilities, $p_i < \bar{p}_i$;
\item $\Pi_{ij}$ is what node $i$ owes to node $j$, as a fraction of the total
amount owed by node $i$,
\[
\Pi_{ij} = \left\{\begin{array}{ll} \frac{L_{ij}}{\bar{p}_i} & \mbox{if } \bar{p}_i \neq 0, \\
0 & \mbox{otherwise;} \end{array}\right.
\]
\item $\Pi$ and $L$ are the matrices whose entries are $\Pi_{ij}$ and $L_{ij}$, respectively.
\end{itemize}
Following~\cite{EiNo01}, we make the following assumptions.
\begin{itemize}
\item If $i$'s total funds are at least as large as its liabilities (i.e.,
$r_i \geq \bar{p}_i$) then all $i$'s creditors get paid in full.
\item All $i$'s debts have the same seniority.  This means that, if $i$'s
liabilities exceed its total funds (i.e., $r_i < \bar{p}_i$) then each
creditor gets paid in proportion to what it is owed.
This guarantees that the amount actually received by node $j$ from node $i$ is
always $\Pi_{ij} p_i$.  Therefore, the total amount received by any node $i$ from
all its creditors is $q_i = \sum_{j=1}^N \Pi_{ji} p_j$.
\end{itemize}
As defined in~\cite{EiNo01}, a {\em clearing payment vector}
${\bf p}$ is a vector of borrower-to-lender payments that is consistent
with these conditions
for given $L$, ${\bf e}$, and ${\bf c}$.  It is shown
in~\cite{EiNo01} (Theorem 2) that a unique ${\bf p}$ exists
for any network that satisfies a mild technical assumption.
We restrict our attention to models that satisfy this assumption
and therefore have a unique clearing payment vector ${\bf p}$.

\section{Results}
\label{sect:results}

\subsection{Minimizing the amount of unpaid liabilities}

Consider a network with a known structure of liabilities $L$
and a known cash vector ${\bf e}$.
Using the notation established in the preceding section,
we can see that Problem~I seeks to find a cash injection allocation
vector ${\bf c}$ to minimize the total amount of unpaid liabilities,
\[
D = \mathbf{1}^T(\bar{\bf p} - {\bf p}),
\]
subject to the constraint that the total amount of cash injection
is some given number $C$:
\[
\mathbf{1}^T{\bf c} = C.
\]
Our first result establishes the equivalence of Problem~I and
a linear programming problem.

\begin{thm}
\label{thm:LP} Assume that the liabilities matrix $L$, the
cash-on-hand vector ${\bf e}$, and the total cash injection amount
$C$ are fixed and known. Assume that the network satisfies all the
conditions listed above.  Then Problem~I has a solution which can be
obtained by solving the following linear program:
\begin{align}
& \mbox{find } {\bf c} \mbox{ and } {\bf p}
\mbox{ to maximize } \mathbf{1}^T{\bf p}
\label{eq:LP}
\\
& \mbox{subject to } \nonumber\\
& \mathbf{1}^T{\bf c} = C, \nonumber\\
& {\bf c} \geq \mathbf{0}, \nonumber\\
& \mathbf{0}\leq {\bf p} \leq \bar{\bf p}, \nonumber\\
& {\bf p} \leq \Pi^T{\bf p} + {\bf e} + {\bf c}. \nonumber
\end{align}
\end{thm}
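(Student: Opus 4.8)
The plan is to exploit the fact that minimizing $D = \mathbf{1}^T(\bar{\bf p}-{\bf p})$ is equivalent to maximizing $\mathbf{1}^T{\bf p}$ (since $\bar{\bf p}$ is fixed), and that the program in~\eqref{eq:LP} is a \emph{relaxation} of Problem~I: it treats ${\bf p}$ as a free variable constrained only by ${\bf 0}\le{\bf p}\le\bar{\bf p}$ and ${\bf p}\le\Pi^T{\bf p}+{\bf e}+{\bf c}$, rather than pinning ${\bf p}$ to the clearing vector determined by ${\bf c}$. The argument splits into two halves: first, every genuine clearing solution is LP-feasible, so the LP optimum is at least the Problem~I optimum; second, any LP-optimal injection actually attains its value through a genuine clearing vector, so the two optima coincide and the LP recovers a true solution of Problem~I.

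First I would verify feasibility. The clearing conditions give $p_i=\min(\bar p_i,\,(\Pi^T{\bf p})_i+e_i+c_i)$, so for any admissible ${\bf c}$ (i.e. ${\bf c}\ge{\bf 0}$, $\mathbf{1}^T{\bf c}=C$) the associated clearing vector ${\bf p}$ satisfies $0\le p_i\le\bar p_i$ and $p_i\le(\Pi^T{\bf p})_i+e_i+c_i$. Hence $({\bf c},{\bf p})$ meets every constraint of~\eqref{eq:LP}, the LP feasible set contains all clearing pairs, and the LP optimum is at least the best $\mathbf{1}^T{\bf p}$ achievable in Problem~I.

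The crux is the reverse direction: for a fixed ${\bf c}$, the clearing vector is the component-wise \emph{largest} ${\bf p}$ satisfying the LP constraints. I would introduce the map $\Phi({\bf p})=\min(\bar{\bf p},\,\Pi^T{\bf p}+{\bf e}+{\bf c})$, so that the ${\bf p}$-constraints of~\eqref{eq:LP} read exactly ${\bf 0}\le{\bf p}\le\Phi({\bf p})$ and the clearing vector is the fixed point ${\bf p}=\Phi({\bf p})$, unique by the Eisenberg--Noe assumption. Since $\Pi\ge0$, the map $\Phi$ is monotone and sends $[{\bf 0},\bar{\bf p}]$ into itself. Starting from any LP-feasible ${\bf p}^{(0)}$, which by feasibility satisfies ${\bf p}^{(0)}\le\Phi({\bf p}^{(0)})$, and iterating ${\bf p}^{(k+1)}=\Phi({\bf p}^{(k)})$ produces a nondecreasing sequence bounded above by $\bar{\bf p}$; its limit is, by continuity of $\Phi$, a fixed point dominating ${\bf p}^{(0)}$, and uniqueness forces this limit to be the clearing vector. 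Thus every LP-feasible ${\bf p}$ is dominated component-wise by the clearing vector, which therefore maximizes $\mathbf{1}^T{\bf p}$ over that ${\bf c}$.

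Finally I would chain the two halves. Let $({\bf c}^\star,{\bf p}^\star)$ be LP-optimal and let ${\bf p}^{\star\star}$ be the clearing vector for ${\bf c}^\star$. The crux step gives ${\bf p}^{\star\star}\ge{\bf p}^\star$ component-wise, so $\mathbf{1}^T{\bf p}^{\star\star}\ge\mathbf{1}^T{\bf p}^\star$; but $({\bf c}^\star,{\bf p}^{\star\star})$ is itself LP-feasible, so optimality of $({\bf c}^\star,{\bf p}^\star)$ forces equality. Hence ${\bf c}^\star$ attains the LP optimum through its \emph{genuine} clearing vector, while the feasibility half shows no admissible injection can do better in Problem~I. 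Therefore ${\bf c}^\star$ solves Problem~I, establishing the equivalence. I expect the monotone fixed-point step to be the only real obstacle; the feasibility check and the optimality-chaining are routine once it is in place.
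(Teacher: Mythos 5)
Your proposal is correct, and its logical skeleton matches the paper's: show the LP feasible set contains all genuine clearing pairs, show the LP is tight for each fixed ${\bf c}$ (i.e.\ the optimal ${\bf p}$ for that ${\bf c}$ is the clearing vector), and then conclude by an optimality-chaining/contradiction argument. The one substantive difference is how the tightness step is handled. The paper simply cites Lemma 4 of Eisenberg and Noe for the fact that, for fixed ${\bf c}$, the unique maximizer of $\mathbf{1}^T{\bf p}$ over $\{{\bf 0}\le{\bf p}\le\bar{\bf p},\ {\bf p}\le\Pi^T{\bf p}+{\bf e}+{\bf c}\}$ is the clearing vector. You instead prove this from scratch: you observe that the ${\bf p}$-constraints read ${\bf 0}\le{\bf p}\le\Phi({\bf p})$ with $\Phi({\bf p})=\min(\bar{\bf p},\Pi^T{\bf p}+{\bf e}+{\bf c})$, that $\Phi$ is monotone and continuous on $[{\bf 0},\bar{\bf p}]$ because $\Pi\ge 0$, and that iterating $\Phi$ from any feasible point yields a nondecreasing, bounded sequence converging to a fixed point dominating the starting point, which by the assumed uniqueness must be the clearing vector. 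This is a correct and standard Knaster--Tarski-style argument, and it is essentially the content of the lemma the paper outsources; your version buys self-containedness at the cost of length, while the paper's buys brevity at the cost of leaning on the reference. Two trivial housekeeping points you leave implicit: you should note that the LP feasible set is nonempty and compact so that an optimal pair $({\bf c}^\star,{\bf p}^\star)$ actually exists (the paper states this explicitly), and in your final chaining step the component-wise domination ${\bf p}^{\star\star}\ge{\bf p}^\star$ is more than you need --- the inequality $\mathbf{1}^T{\bf p}^{\star\star}\ge\mathbf{1}^T{\bf p}^\star$ together with feasibility of $({\bf c}^\star,{\bf p}^{\star\star})$ already closes the argument, exactly as the paper's contradiction does.
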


\begin{proof}
Since the constraints on ${\bf c}$ and ${\bf p}$ in linear
program (\ref{eq:LP}) form a closed and bounded set in $\mathbb{R}^{2N}$,
a solution exists.  Moreover,
for any fixed ${\bf c}$, it follows from Lemma 4 in~\cite{EiNo01}
that the linear program has a unique solution for ${\bf p}$ which
is the clearing payment vector for the system.

Let $({\bf p}^\ast,{\bf c}^\ast)$ be a solution to (\ref{eq:LP}).
Suppose that there exists a cash injection allocation that leads
to a smaller total amount of unpaid liabilities than does ${\bf c}^\ast$.
In other words, suppose that there exists ${\bf c'}>\mathbf{0}$,
with $\mathbf{1}^T{\bf c'} = C$, such that the
corresponding clearing payment vector ${\bf p'}$ satisfies
$
\mathbf{1}^T(\bar{\bf p} - {\bf p'}) <
\mathbf{1}^T(\bar{\bf p} - {\bf p}^\ast),
$
or, equivalently,
\begin{align}
\mathbf{1}^T{\bf p}^\ast < \mathbf{1}^T{\bf p'}.
\label{eq:contradiction}
\end{align}
Note that ${\bf c'}$ satisfies the first two constraints of
(\ref{eq:LP}).  Moreover, since ${\bf p'}$ is the corresponding
clearing payment vector, the last two constraints are satisfied
as well.  The pair $({\bf p'},{\bf c'})$ is thus in the
constraint set of our linear program.  Therefore,
Eq.~(\ref{eq:contradiction}) contradicts the assumption that
$({\bf p}^\ast,{\bf c}^\ast)$ is a solution to (\ref{eq:LP}).
This completes the proof that ${\bf c}^\ast$ is the allocation of
$C$ that achieves the smallest possible amount $D$ of unpaid
liabilities.
\end{proof}

In the Lagrangian formulation of Problem~I, we are given a weight
$\lambda$ and must choose the total cash injection
amount $C$ and its allocation ${\bf c}$ to
minimize $C+\lambda D$. This is equivalent to the following
linear program:
\begin{align}
& \mbox{find } C, {\bf c}, \mbox{ and } {\bf p}
\mbox{ to maximize } \lambda\mathbf{1}^T{\bf p} - C
\label{eq:LP2}
\\
& \mbox{subject to the same constraints as in (\ref{eq:LP}).} \nonumber
\end{align}
This equivalence follows from Theorem~\ref{thm:LP}: denoting
a solution to (\ref{eq:LP2}) by $(C^\ast,{\bf p}^\ast,{\bf c}^\ast)$,
we see that the pair $({\bf p}^\ast,{\bf c}^\ast)$ must be a solution
to (\ref{eq:LP}) for $C=C^\ast$.  At the same time, the fact that
$C^\ast$ maximizes the objective function in (\ref{eq:LP2}) means that
it minimizes $C+\lambda D = C + \lambda\mathbf{1}^T(\bar{\bf p}-{\bf p})$,
since $\bar{\bf p}$ is a fixed constant.

\subsection{Minimizing the number of defaults}

Given that the total amount of cash injection
is $C$, Problem~II seeks to find a cash injection allocation
vector ${\bf c}$ to minimize the number of defaults $N_d$,
i.e., the number of nonzero entries in the vector
$\bar{\bf p} - {\bf p}$.

We adapt the reweighted $\ell_1$ minimization strategy
approach from Section 2.2 of~\cite{CaWaBo08}.  Our algorithm
solves a sequence of weighted versions of the linear program (\ref{eq:LP}),
with the weights designed to encourage sparsity of $\bar{\bf p} - {\bf p}$.
In the following pseudocode of our algorithm,
${\bf w}^{(m)}$ is the weight vector during the $m$-th iteration.
\begin{enumerate}
\item $m \leftarrow 0$.
\item Select ${\bf w}^0$ (e.g., ${\bf w}^0 \leftarrow {\bf 1}$).
\item Solve linear program (\ref{eq:LP}) with objective function replaced by
${\bf p}^T{\bf w}^{(m)}$.
\item Update the weights: for each $i=1,\cdots,N$,
\begin{equation}
w_i^{(m+1)}\leftarrow \frac{K}{\exp\left(\bar{p}_i-p^{\ast(m)}_i\right)+\epsilon},
\nonumber
\end{equation}
where $K>0$ and $\epsilon>0$ are constants, and ${\bf
p}^{\ast(m)}$ is the clearing payment vector obtained in Step 3.
\item If $\|{\bf w}^{(m+1)}-{\bf w}^{(m)}\|_1 < \delta$, where $\delta>0$ is a constant, stop;
else, increment $m$ and go to Step 3.
\end{enumerate}

\begin{figure}
    \centering
    \includegraphics[width=0.47\textwidth]{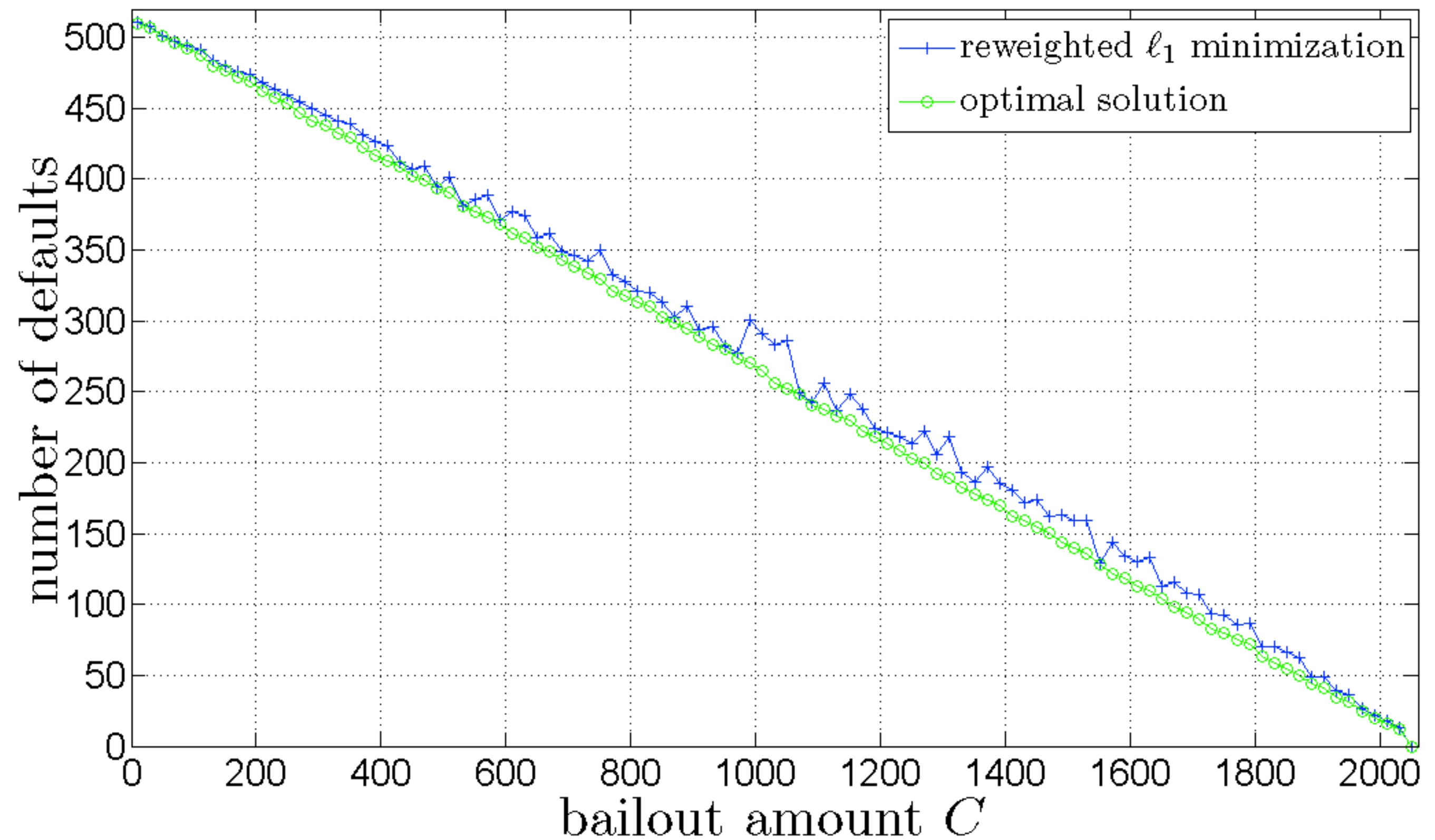}
    \caption{Illustration of the algorithm  for Problem~II.}
\label{fig:evaluation}
\end{figure}

\vspace*{-0.2in}
We test the algorithm on a network for which we know the optimal
solution. We use a full binary tree with 10 levels and $N=2^{10}-1$ nodes:
levels 0 and 9 correspond to the root and the leaves, respectively.  Every node at level $s<9$
owes $\$2^{10-s}$ to each of its two creditors (children).  We set ${\bf e} = {\bf 0}$.

If $C=0$,
then all 511 non-leaf nodes are in default, and the 512 leaves are not
in default.  In aggregate, the nodes at any level $s<9$
owe $\$2048$ the nodes at level $s+1$.  Therefore, if
$C\geq \$2048$, then $N_d=0$ can be achieved by allocating the entire amount
to the root node.

For $0 < C < 2048$, we first observe
that if $C = 2^{11-s}$ for some integer $s$, then the optimal solution is to allocate
the entire amount to a node at level~$s$.  This would prevent the defaults of this node and all
its non-leaf descendants, leading to $511-(2^{9-s}-1)$ defaults.  If $C$ is not a power of two,
we can represent it as a sum of powers of two and apply the same argument recursively,
to yield the following optimal number of defaults:
\[
N_d=\displaystyle 511-\sum_{u=3}^U b(u)\cdot (2^{u-2}-1),
\]
where $b(u)$ is the $u$-th bit in the binary representation
of $C$ (right to left) and $U$ is the number of bits.
The green line in Fig.~\ref{fig:evaluation} is a plot of
the minimum number of defaults as a function of $C$.  The
blue line is the solution calculated by our reweighted
$\ell_1$-minimization algorithm
with $K=1000$, $\epsilon=0.001$ and $\delta=0.001$.
The algorithm was run using six different initializations:
five random ones and $\bf w^{(0)} = 1$.  Among the six solutions,
the one with the smallest number of defaults was selected.
As evident from Fig.~\ref{fig:evaluation}, the results are
very close to the optimal for the entire range of $C$.

\bibliographystyle{plain}
{\footnotesize
\bibliography{WIDS}
}
\end{document}